\DeclareMathAlphabet{\mathrsfs}{U}{rsfs}{m}{n}
\DeclareMathAlphabet{\mathpzc}{OT1}{pzc}{m}{it}
\DeclareMathAlphabet{\matheus}{U}{eus}{m}{n}
\DeclareMathAlphabet{\mathbbold}{U}{bbold}{m}{n}
\newtheorem{theorem}{Theorem}
\newtheorem{lemma}[theorem]{Lemma}
\newcommand{\ba}{\begin{eqnarray}}
\newcommand{\ea}{\end{eqnarray}}
\newcommand{\ban}{\begin{eqnarray*}}
\newcommand{\ean}{\end{eqnarray*}}
\newcommand{\Tr}{\operatorname{Tr}}
\def\be{\begin{equation}}
\def\ee{\end{equation}}
\newcommand{\ave}[1]{\langle#1\rangle}
\newcommand{\ket}[1]{|#1\rangle}
\newcommand{\sgn}{\operatorname{sgn}}
\newcommand{\referfig}[1]{Figure \ref{#1}}
\newcommand{\matrixww}[4]{\left( \begin{array}{cc} #1 & #2 \\#3 & #4 \\ \end{array} \right) }
\begin{document}

\title{Quantum Bell Inequalities from Macroscopic Locality}

\author{Tzyh Haur Yang}
\affiliation{Centre for Quantum Technologies, National University of Singapore, Singapore}
\author{Miguel Navascu\'es}
\affiliation{Depto An\'alisis Matem\'atico and IMI, Universidad Complutense de Madrid}
\author{Lana Sheridan}
\affiliation{Centre for Quantum Technologies, National University of Singapore, Singapore}
\author{Valerio Scarani}
\affiliation{Centre for Quantum Technologies, National University of Singapore, Singapore}
\affiliation{Department of Physics, National University of Singapore, Singapore}

\date{\today}


\begin{abstract}
We propose a method to generate analytical quantum Bell inequalities based on the principle of Macroscopic Locality. By imposing locality over binary processings of virtual macroscopic intensities, we establish a correspondence between Bell inequalities and quantum Bell inequalities in bipartite scenarios with dichotomic observables. We discuss how to improve the latter approximation and how to extend our ideas to scenarios with more than two outcomes per setting.
\end{abstract}

\pacs{03.65.Ud}

\maketitle


\section{Introduction}
Any probability distribution achievable in the measurement of a bipartite quantum system can be written as
\ban
P(a,b|x,y)&=&\Tr\left(\rho\,\Pi_a^x\otimes \Pi_b^y\right)
\label{tensor}
\ean
where $\rho$ is a density operator of unrestricted dimension and the $\Pi$'s are projectors. The set of quantum correlations (or \textit{quantum set}) is the set of all probability distributions that can be written in this form. It has long been known that some correlations in this set are \emph{nonlocal}, in the sense that they cannot be produced by pre-established agreement \cite{Bell1966}. On another front, the quantum set is strictly smaller than the \emph{no-signaling set}, i.e. those correlations that cannot be used to send a signal \cite{Popescu1994}.

The mathematical characterization of the quantum set is not trivial \cite{Cirel'son1980}. One of the few definite results was obtained for the case of two dichotomic measurements per site, i.e., $x$, $y$, $a$ and $b$ can take only two values each (a situation that we shall refer to as $2222$). In this scenario, up to symmetries, the only Bell inequality is the Clauser-Horne-Shimony-Holt (CHSH) \cite{Clauser1969}
\ba
\sum_{x,y} (-1)^{xy}\,\ave{a_xb_y}&\leq& 2,
\ea where for convenience of notation we have chosen $x,y\in\{0,1\}$, $a,b\in\{-1,+1\}$. Tsirelson and Landau (and, much later, Masanes) independently proved that the set of quantum correlations is bounded by a \textit{quantum Bell inequality} obtained from CHSH upon replacing $\ave{a_xb_y}$ by $\frac{2}{\pi}\arcsin\ave{a_xb_y}$ \cite{Tsirelson1987,Landau1988,Masanes2003}. More recently, Navascu\'es, Pironio and Ac\'{\i}n proved that, if we relax the tensor structure assumption in (\ref{tensor}) and instead demand commutativity between Alice's and Bob's measurement operators, the resulting set of correlations can be approximated, and ultimately reached, by a hierarchy of semi-definite programs \cite{Navascues2007,Navascues2008}. In the process, they proved that the replacement $\ave{a_xb_y} \longrightarrow \frac{2}{\pi}\arcsin D_{xy}$, where
\ba
D_{xy}=\frac{\ave{a_xb_y}-\ave{a_x}\ave{a_y}}{\sqrt{(1-\ave{a_x}^2)(1-\ave{b_y}^2)}} \label{replace}
\ea in the CHSH inequality leads to a quantum Bell inequality that is stronger than the previously known one and identifies the set denoted $Q^1$, the first step of the semi-definite hierarchy.

Aside from from the mathematical line of attack, several authors have followed the spirit of the Popescu-Rohrlich paper \cite{Popescu1994} and tried to identify \textit{physical principles} that would define the quantum set \cite{Popescu2006,Brunner2009,Pawlowski2009,Navascues2010,Oppenheim2010}. The principle of information causality has already proved useful to derive bounds on the quantum set in non-trivial scenarios \cite{Cavalcanti2010}. In this paper, we use the principle of \textit{macroscopic locality} \cite{Navascues2010} to derive quantum Bell inequalities. The recipe is applicable to all bipartite scenarios. Focusing on $nm22$ scenarios, we prove that a quantum Bell inequality can be obtained from any standard Bell inequality through (\ref{replace}). Such inequalities are the only analytical approximations to the quantum set available to date.

\section{Tools}

\subsection{Macroscopic locality}
Before presenting our results, we review briefly the idea of macroscopic locality (ML)  \cite{Navascues2010}. We consider a bipartite system, with a source emitting pairs of particles to Alice and Bob. Alice can choose a measurement $x\in X=\{1\ldots,m_A\}$ from a set of $m_A$ possible settings, each producing $d_A$ possible outcomes, denoted as $a\in \mathcal{A}=\{1,\ldots,d_A\}$. Similarly, Bob can choose a measurement $y$ from the set $Y=\{1,\ldots,m_B\}$, with $d_B$ outcomes $b\in \mathcal{B}=\{1,\ldots,d_B\}$. One calls a \textit{microscopic experiment} one in which single-particle detections are possible. In this case, routine sampling leads to reconstructing $P(a,b|x,y)$. By contrast, a \textit{macroscopic experiment} is defined as follows: in each run, $N$ pairs are sent; Alice measures all her particles with setting $x$ and records the number $n_a$ of particles that produced the outcome $a\in\mathcal{A}$; Bob does similarly. After repeating this procedure several times, one can reconstruct $P(\vec{n}_A,\vec{n}_B|x,y)$, where $\vec{n}_A=[n_{a=1},...,n_{a=d_A}]$ and $\vec{n}_B=[n_{b=1},...,n_{b=d_B}]$. Clearly, some microscopic information is lost as soon as $N>1$: Alice and Bob do not know which of their outcomes came from the same pair.

The principle of ML demands that, in the limit $N\rightarrow\infty$ and under the assumption that Alice's and Bob's measurement devices cannot detect particle number fluctuations smaller than $O(\sqrt{N})$, the macroscopic statistics should not violate any Bell inequality. The principle is not trivial: for instance, one can easily check that, if the source were a Popescu-Rohrlich (PR) box, the macroscopic statistics would violate CHSH up to 4, just as the microscopic statistic do. In fact, ML is respected if and only if the microscopic statistics $P(a,b|x,y)$ belong to $Q^1$ \cite{Navascues2010}.

\subsection{Recipe for quantum Bell inequalities}
It is now easy to formulate our recipe to generate quantum Bell inequalities. Consider a data processing of the macroscopic variables
\begin{align}
\vec{n}_A \rightarrow \alpha\in\mathcal{A}' &,& 	\vec{n}_B \rightarrow \beta\in\mathcal{B}'\,.\label{dataproc}
\end{align} This defines a scenario with $m_A$ measurements for Alice, $m_B$ for Bob, $|\mathcal{A}'|=d_A'$ outcomes for Alice and $|\mathcal{B}'|=d_B'$ for Bob -- in short, denoted as $m_Am_Bd_A'd_B'$ scenario. From ML it follows that, if $P(a,b|x,y)$ belongs to the quantum set, $P(\alpha,\beta|x,y)$ cannot violate any Bell inequality. In other words, for any Bell inequality ${\cal I}\leq 0$ in the scenario $m_Am_Bd_A'd_B'$, the condition ${\cal I}[P(\alpha,\beta|x,y)]\leq 0$ defines a quantum Bell inequality for the microscopic correlations. If in addition $P(\alpha,\beta|x,y)$ can be expressed as a function of the parameters of $P(a,b|x,y)$, an analytical bound on the quantum set is obtained.

Of course, we know already that none of these criteria will come closer to the quantum set than $Q^1$ does; and in general, the bound may even be less tight than $Q^1$ (we show an explicit example below). However, to date, this is the only known systematic approach to obtain analytical approximations to the quantum set in any scenario. In what follows, we study some specific examples of quantum Bell inequalities generated with this recipe.

\section{Case studies}

\subsection{Two outcomes: Sign-binning}
Let $d_A=d_B=2$ and for convenience we use $\mathcal{A}=\mathcal{B}=\{-1,1\}$. Consider first a specific choice of measurements $x,y$ (so, for simplicity, we omit them in the notation). From her macroscopic statistics, Alice can infer the microscopic marginal average
\begin{align}
\ave{a}\,=\,\ave{\Delta n_A}/N& \;\;\textrm{with}& \Delta n_A\equiv n_{a=+1}-n_{a=-1}\,.
\end{align} A natural data processing that gives back binary outcomes ($\mathcal{A}'=\{-1,1\}$) consists in comparing $\Delta n_A$ with $\ave{a}$ in each run:
\begin{align}
\textrm{Sign-binning:}&\;\;\alpha\,=&\left\{\begin{array}{ll} +1 & \textrm{if $\Delta n_A\geq \ave{a}$}\\ -1 & \textrm{otherwise} \end{array}\right.\,.\label{signbinning}
\end{align} Bob can of course do the same. Now we want to relate $\alpha$ and $\beta$ obtained by sign-binning to the properties of the microscopic distribution. Any macroscopic run uses $N$ pairs: denote therefore $\{a^{(1)},...,a^{(N)}\}$ the results Alice would have observed on those $N$ pairs in a microscopic experiment, and similarly $\{b^{(1)},...,b^{(N)}\}$ for Bob. Then we define
\begin{align}
	\alpha &= \sgn \left( \sum_{k=1}^N \frac{a^{(k)} -\ave{a}}{\sqrt{N}}\right) \equiv \sgn(a'), \\
	\beta &= \sgn \left( \sum_{k=1}^N \frac{b^{(k)} -\ave{b}}{\sqrt{N}} \right) \equiv \sgn(b') .
\end{align}
In the limit $N\rightarrow\infty$, due to the central limit theorem, the joint probability distribution of the variables $(a',b')$ is a gaussian distribution $G_\Gamma(a',b')$ with zero mean and covariance matrix, $\Gamma$ determined by the microscopic distribution \cite{Navascues2010}
\begin{align}	\Gamma=\matrixww{1-\ave{a}^2}{\ave{ab}-\ave{a}\ave{b}}{\ave{ab}-\ave{a}\ave{b}}{1-\ave{b}^2}\,.
\end{align} Then, with the notation $\overline{f}=\int da'db' \; G_\Gamma(a',b')f(a',b')$, a straightforward integration shows that
\ba
\overline{\alpha}=\overline{\beta}&=&0, \;\;\; \overline{\alpha\beta} =\frac{2}{\pi}\arcsin D_{xy} \,. \label{eqn:avealpha} 
\ea
All that we have done is valid for any pair of measurements $x,y$. Therefore, as claimed, the replacement (\ref{replace}) is not specific to the CHSH scenario: given \textit{any} bipartite Bell inequality with binary outputs, one can impose that $P(\alpha,\beta|x,y)$ respects it and then replace all the $\overline{\alpha_x}$, $\overline{\beta_x}$ and $\overline{\alpha_x\beta_y}$ with the corresponding expressions above: the resulting inequality will automatically define a quantum Bell inequality for the microscopic distribution $P(a,b|x,y)$. Let us denote the set of correlations satisfying these inequalities as $Q^{SB}$. Of course, from the above discussion it follows that $Q^1\subset Q^{SB}$.

Many bipartite Bell inequalities with binary output are known \cite{Pal2008,Brunner2008}, some of which have intriguing properties (e.g., they are maximally violated by non-maximally entangled states, or define dimension witnesses). For the sake of this paper, we consider only two examples in what follows.

\subsection{Sign binning for $2n22$} With CHSH being a special case $(n=2)$ of this scenario, we note that the only relevant Bell inequality in $2n22$, up to symmetries, is the CHSH inequality \cite{Collins2004}. Using the replacement in (\ref{eqn:avealpha}), we obtain the analytical inequality 
\begin{align}
	\left| \arcsin D_{1i} + \arcsin D_{2i} + \arcsin D_{1j} - \arcsin D_{2j}     \right| \leq \pi \label{nchsh}
\end{align}
for all $i,j=1,\ldots,n$ as a quantum Bell inequality. For the 2222 scenario, we recover the inequality given in \cite{Navascues2010}, thus proving that $Q^{SB}= Q^1$ in that case. In fact, $Q^1=Q^{SB}$ also holds in any $2n22$ scenario: this is not unexpected, since it is known that CHSH is the only relevant inequality for those scenarios \cite{Collins2004}. We give an explicit proof in the Appendix \ref{append:2n22}. In summary, in $2n22$ scenarios, the distribution $P(\vec{n}_A,\vec{n}_B|x,y)$ is local if and only if $P(\alpha,\beta|x,y)$ is local. This remarkable coincidence is however not general as shown in the next paragraph.

\subsection{$3322$ Scenario: sign-binning and beyond}
For $3322$, besides CHSH,  there is only one inequivalent Bell inequality denoted as $I_{3322}$ \cite{Froissart1981,Collins2004}. The set $Q^{SB}$ is therefore defined by both (\ref{nchsh}) and 
\begin{align}
	\left|\sum_{x+y\leq 4} \arcsin D_{xy} - \arcsin D_{32} - \arcsin D_{23} \right| \leq 2\pi. \label{3322condition1}
\end{align}
By considering a specific slice of the 3322 no-signaling polytope, we found that this set of inequalities gives weaker constraints than ML, i.e. $Q^1 \subset Q^{SB}$. This can be seen from \referfig{slice3322}. In the two dimensional slice defined by the three extremal points, $P_1$, $P_2$ and $P_n$, $Q^{SB}$ gives a bound of $I_{3322}=0.4$ while ML (i.e. $Q^1$) gives 0.2.

\begin{figure}[htbp!]
	\includegraphics[scale=0.15]{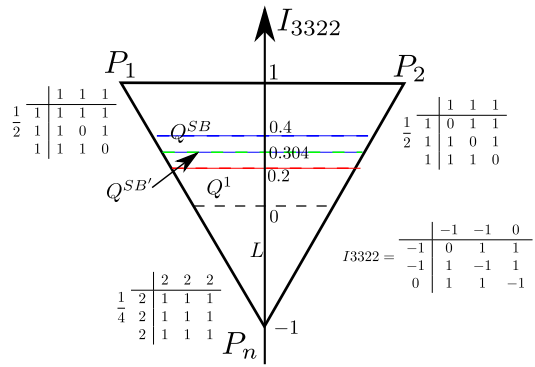}
	\caption{A slice of the 3322 no-signaling polytope defined by the two extremal points $P_1$, $P_2$ ($I_{3322}=1$) and $P_n$ (random noise).  In this slice, the bound for the local set $L$ and $Q$ are the same, which is $0$. The notation of the points is taken from \cite{Collins2004} and the extremality of $P_1$ and $P_2$ follows from \cite{Jones2005,Barrett2005}.\label{slice3322}}
\end{figure}

It is not astonishing that the quantum Bell inequalities obtained from sign binning may be weaker. Geometrically, sign binning is obtained by slicing the Gaussian curve into two symmetrical parts, each part representing a macroscopic outcome. In general though, macroscopic locality is recovered by slicing the curve into infinitely many parts or macroscopic outcomes \cite{Navascues2010}, as shown in \referfig{gaussian2222}.

This motivates us to investigate different forms of binning. Here we show a form of data processing induced by the Gaussian function 
\begin{align}
	P(x)=\exp\Big(-\frac{x^2}{2\sigma}\Big). \label{gaussiansigma}
\end{align}
The binning is then defined as follows,
\begin{align}
\textrm{3-binning:}\;\; &\alpha\,=&\left\{
\begin{array}{ll} 
	0 & \textrm{with probability $P(\delta n)$}\\
	+1 & \textrm{otherwise and $\delta n>0$}\\
	-1 & \textrm{otherwise} 
\end{array}\right.\,.
\end{align}
where $\delta n\equiv \Delta n_A-\ave{a}$. This generates a new set of probabilities $\{P(\alpha',\beta'|x,y)\}$ in the $3333$-scenario, all of which can be computed analytically. Thus, this allows us, in principle, to have an analytical quantum Bell inequality. Here, we numerically determine the bound and find that the new set, $Q^{SB'}$ gives a better approximation to $Q$. This can be seen in \referfig{slice3322} where $Q^{SB'}$ gives 0.304 by taking $\sigma=0.028$ in Eq. (\ref{gaussiansigma}).

\subsection{Binning for $n>2$ microscopic outcomes.}

The sign binning processing is of course a natural processing for binary outcome settings, though not optimal. It can also be applied to settings with more than two outcomes; but here we sketch a more natural processing for such scenarios. As an example, we will focus on the case of $2233$: $X=Y=\{0,1 \} $ and $\mathcal{A}=\mathcal{B}=\{0,1,2\}$. We first define the quantity 
\begin{align}
	I^A_i = \dfrac{n_i(\vec{a}) - p^A_iN}{\sqrt{N}}, \forall i\in \mathcal{A} 
\end{align}
for a particular fixed value of $x$ and $y$. The data processing, in this case called the triangle-binning, is then defined as
\begin{align}
	\alpha\,=k, \hspace{0.4cm} \textrm{s.t.}\hspace{0.4cm} I^A_k = \max_{i\in\mathcal{A}}\{I^A_i\} .
\end{align}
Bob can similarly do the same. In the limit $N\rightarrow \infty$, the probability distribution for the variables $(\alpha_x,\beta_y)$ is a Gaussian distribution $G_\Gamma(\alpha_x,\beta_y)$ with the covariance matrix $\Gamma$ defined in the usual sense. The formulas for the probabilities $P(\alpha,\beta|x,y)$ include definite trigonometric integrals; they may not lead to elegant analytical expressions, but they can then be evaluated numerically to check the method. It is known that $2233$ has only one relevant Bell inequality of the Collins-Gisin-Linden-Massar-Popescu (CGLMP) type \cite{Collins2002}: $I_{CGLMP}\leq 0$. We can then impose the condition $I_{CGLMP}[P(\alpha,\beta|x,y)]\leq 0$. The result is shown in \referfig{line2233} where the constrained set of correlation is now denoted as $Q^{TB}$. Contrary to sign binning, triangle binning does not reproduce the predictions of ML. Triangle binning may thus not be worth pursuing further, but it shows that generalization is indeed possible.
\begin{figure}[htbp!]
	\includegraphics[scale=0.23]{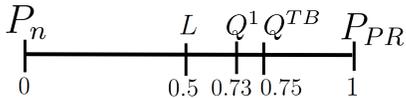}
	\caption{The line of random marginals defined by the random noise, $P_n$ and the generalized PR box, $P_{PR}$ \cite{Scarani2006}. The bounds given by the local set, $L$, macroscopic locality, $Q^1$ and triangle binning, $Q^{\mathrm{TB}}$ are 0.5, 0.75 and 0.73 respectively.\label{line2233}}
\end{figure}

\begin{figure}[htbp!]
		\hspace{-2.8cm}
	\includegraphics[scale=0.16]{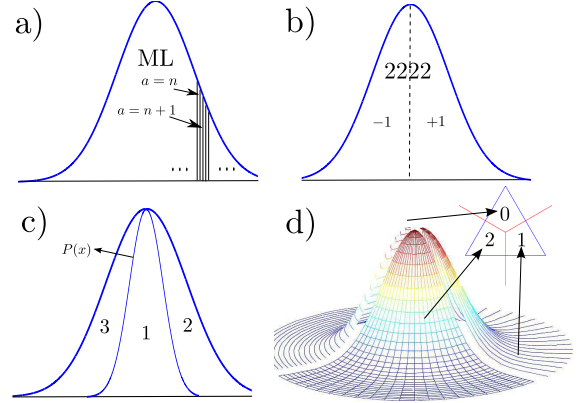}
	\caption{a) ML in the sense of binning into infinitely many parts. b) Sign binning for scenarios with binary outcomes. c) 3-binning induced by the probability, $P(x)$. d) Triangle-binning for scenarios with three outcomes. \label{gaussian2222}}s
\end{figure}

\section{Conclusion}
In this paper, we have studied a new post-processing method, called sign binning, to define a set of quantum correlations denoted as $Q^{SB}$, which necessarily contains the quantum set, $Q$. Sign binning, as mentioned, cannot give a bound on the quantum set that is tighter than $Q^1$, but it does give a straightforward, intuitive way of generating quantum Bell inequalities. To our knowledge, this recipe is the only method to date to derive analytical quantum Bell inequalities.

\begin{acknowledgments}
This work was supported by the National Research Foundation and the Ministry of Education, Singapore, and by the European Project QUEVADIS.
\end{acknowledgments}




\appendix

\section{Locality of Sign Binning is Equivalent to Macroscopic Locality in $2n22$ \label{append:2n22}}

\begin{lemma}
\label{equiva}
Let $\Gamma$ be an $n+2$ square matrix of the form
\begin{align}
\Gamma=\left(\begin{array}{cc}A&C\\C^T&B\end{array}\right),
\label{structure}
\end{align}
\noindent where $C$ is a given $2\times n$ real matrix and $A,B$ are such that $A_{ii}=B_{jj}=1$ for $i=1,2$ and $j=1,...,n$. Then there exists a choice of the remaining entries of $A$ and $B$ such that $\Gamma\geq 0$ iff there exists $x\in [-1,1]$ such that
\begin{align}
1-x^2-C_{1i}^2-C_{2i}^2+2xC_{1i}C_{2i}\geq 0   ,
\label{condition}
\end{align}
\noindent for $i=1,...,n$.
\end{lemma}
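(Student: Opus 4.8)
The plan is to recognize condition (\ref{condition}) as the nonnegativity of a single $3\times 3$ principal minor of $\Gamma$, and then to reduce the completion problem entirely to that family of constraints via a Schur complement. Since $A$ is $2\times 2$ with unit diagonal, its only free entry is $A_{12}=A_{21}=:x$, so $A=\matrixww{1}{x}{x}{1}$. A direct cofactor expansion shows that the $3\times 3$ principal submatrix of $\Gamma$ supported on rows and columns $\{1,2,2+i\}$,
\begin{equation}
M_i=\left(\begin{array}{ccc} 1 & x & C_{1i} \\ x & 1 & C_{2i} \\ C_{1i} & C_{2i} & 1 \end{array}\right),
\end{equation}
has determinant $\det M_i = 1-x^2-C_{1i}^2-C_{2i}^2+2xC_{1i}C_{2i}$, which is exactly the left-hand side of (\ref{condition}). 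Thus, under the identification $x=A_{12}$, condition (\ref{condition}) reads $\det M_i\geq 0$ for all $i$.

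For the forward (``only if'') direction I would use that every principal submatrix of a positive semidefinite matrix is positive semidefinite. Hence, if some completion makes $\Gamma\geq 0$, then setting $x:=A_{12}$, the $2\times 2$ block $A\geq 0$ forces $\det A=1-x^2\geq 0$, i.e. $x\in[-1,1]$, while $M_i\geq 0$ forces $\det M_i\geq 0$, which is precisely (\ref{condition}). This direction is immediate and requires no explicit construction.

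The substance lies in the backward (``if'') direction, where I must exhibit the missing entries of $A$ and $B$. Given $x\in[-1,1]$ with all $f_i(x):=\det M_i\geq 0$, set $A_{12}=x$. Assuming first $|x|<1$, so that $A\succ 0$, the Schur complement criterion gives $\Gamma\geq 0$ iff $B\succeq S$ with $S:=C^T A^{-1}C$. The key observation is that the off-diagonal entries of $B$ are entirely free while its diagonal is pinned to $1$; I can therefore choose $B_{ij}=S_{ij}$ for $i\neq j$ and $B_{ii}=1$, so that $B-S=\mathrm{diag}(1-S_{11},\ldots,1-S_{nn})$. This is positive semidefinite iff $S_{ii}\leq 1$ for all $i$, and conversely any positive semidefinite $B-S$ must have nonnegative diagonal $1-S_{ii}$, so a valid $B$ exists iff $S_{ii}\leq 1$ for every $i$. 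A one-line computation with $A^{-1}=\frac{1}{1-x^2}\matrixww{1}{-x}{-x}{1}$ gives $S_{ii}=\frac{C_{1i}^2-2xC_{1i}C_{2i}+C_{2i}^2}{1-x^2}$, and $S_{ii}\leq 1$ is then equivalent to $f_i(x)\geq 0$, closing the equivalence on the interior.

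The main obstacle I anticipate is the degenerate boundary case $|x|=1$, where $A$ is singular and the Schur complement step breaks down. Here I would argue that the solution set $\{x\in[-1,1]:f_i(x)\geq 0\ \forall i\}$ is a closed interval, so if it meets the open interval the construction above applies directly; otherwise I would replace $A^{-1}$ by the Moore--Penrose pseudoinverse and check the range condition $\mathrm{range}(C)\subseteq\mathrm{range}(A)$, which $f_i(\pm1)\geq 0$ enforces (in the paper's setting $C$ is a genuine covariance block, so $|C_{ki}|\leq 1$ and this case is benign). Equivalently, the entire argument can be recast geometrically: realize $A$ as the Gram matrix of unit vectors $u_1,u_2$ with $\langle u_1,u_2\rangle=x$ and seek unit vectors $v_i$ with $\langle u_k,v_i\rangle=C_{ki}$; such $v_i$ exist precisely when the component of $v_i$ inside $\mathrm{span}(u_1,u_2)$ has squared norm $S_{ii}\leq 1$, and assigning mutually orthogonal transverse components then exhibits $\Gamma$ as an honest Gram matrix, hence positive semidefinite.
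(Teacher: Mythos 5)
Your proposal is correct, and its constructive core coincides exactly with the paper's: for $|x|<1$ both arguments set $A_{12}=x$, invoke the Schur complement, use the freedom in the off-diagonal entries of $B$ to make $B-C^TA^{-1}C$ diagonal, and note that the resulting diagonal entries are $f_i(x)/(1-x^2)\geq 0$. Where you genuinely depart from the paper is in the ``only if'' direction: by recognizing the left-hand side of (\ref{condition}) as $\det M_i$ for the $3\times 3$ principal submatrix on rows and columns $\{1,2,2+i\}$, necessity follows in one stroke from the fact that every principal submatrix of a positive semidefinite matrix is positive semidefinite (hence has nonnegative determinant), with no case analysis; the paper instead splits this direction into $|x|=1$ (Gram decomposition, $\vec{v}_1=\pm\vec{v}_2$, hence $C_{1i}=\pm C_{2i}$) and $|x|<1$ (Schur complement again). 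The treatments of the degenerate case $|x|=1$ in the constructive direction also differ: the paper writes down explicit Gram vectors $\vec{v}_{i+2}=C_{1i}\ket{0}+\sqrt{1-C_{1i}^2}\ket{1}$, while you reduce to the interior via the observation that the feasible set of $x$ is a closed interval (each $f_i$ is a concave quadratic) and otherwise fall back on the pseudoinverse Schur complement with the range condition. Both treatments quietly need $|C_{ki}|\leq 1$ at this boundary step --- the paper through the square root in its vector construction, you through $S_{ii}=C_{1i}^2\leq 1$ --- and indeed without that hypothesis the lemma as literally stated fails: for $n=1$, $C_{11}=C_{21}=2$ satisfies (\ref{condition}) at $x=1$, yet no completion is positive semidefinite since the principal submatrix $\matrixww{1}{2}{2}{1}$ is indefinite. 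This is harmless in the intended application, where the entries of $C$ are normalized correlations $D_{xy}\in[-1,1]$, and it is to your credit that you flagged the assumption explicitly where the paper leaves it implicit.
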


\begin{proof}
Let us first prove that, if condition (\ref{condition}) holds, then $\Gamma$ can be made positive semidefinite. Suppose that, indeed, such an $x$ exists and $|x|<1$. Then, we can take $A$ to be
\begin{align}
A=\left(\begin{array}{cc}1&x\\x&1\end{array}\right)>0.
\end{align}
According to Schur's theorem \cite{Horn1999}, if $A>0$, a matrix of the form (\ref{structure}) is positive semidefinite iff $B'\equiv B-C^TA^{-1}C\geq 0$. Since the non-diagonal entries of $B$ are not determined a priori, we can always choose them such that $B'_{ij}=0$ for $i\not= j$. To see that $B'$ is positive semidefinite, we then only have to show that $B'_{ii}\geq 0$. But
\begin{align}
B'_{ii}=\frac{1-x^2-C_{1i}^2-C_{2i}^2+2xC_{1i}C_{2i}}{1-x^2},
\end{align}
\noindent that is non-negative by hypothesis. We have just proven that, for $|x|<1$, condition (\ref{condition}) grants positive semidefiniteness. Suppose now that (\ref{condition}) holds for $x=1$. Then the equation reads
\begin{align}
-(C_{1i}-C_{2i})^2\geq 0, \mbox{ for } i=1,...,n.
\end{align}
\noindent It follows that $C_{1i}=C_{2i}$ for all $i$. In order to show that $\Gamma$ can be completed to a positive semidefinite matrix, take the orthonormal basis $\{\ket{0},\ket{1}\}$ and define the vectors:
\begin{align}
\vec{v}_{1,2}\equiv \ket{0};\vec{v}_{i+2}\equiv C_{1i}\ket{0}+\sqrt{1-C^2_{1i}}\ket{1}.
\end{align}
Then, the Gram matrix $\Gamma'_{ij}=\vec{v}_i\cdot \vec{v}_j$ is positive semidefinite, has 1s in the diagonal and its off-diagonal submatrix coincides with $C$.

\noindent The case $x=-1$ can be treated analogously (simply take $\vec{v}_1=-\vec{v}_2$).

Now we will prove the opposite implication: suppose that there is some way to complete $\Gamma$ such that $\Gamma\geq 0$. Let $\tilde{\Gamma}$ be such completion and take $x= \tilde{A}_{12}$. If $x=\pm 1$, then the Gram decomposition of $\tilde{\Gamma}_{ij}=\vec{v}_i\cdot \vec{v}_j$ \cite{Horn1999} is such that $\|\vec{v}_1\|=\|\vec{v}_2\|=1$ and $\vec{v}_1\cdot \vec{v}_2=\pm 1$. This implies that $\vec{v}_1=\pm \vec{v}_2$, and so $C_{1i}=\vec{v}_1\cdot\vec{v}_{2+i}=\pm\vec{v}_2\cdot\vec{v}_{2+i}=\pm C_{2i}$, and condition (\ref{condition}) holds for $x=\pm 1$.

Suppose that, on the contrary, $|x|<1$. Then $A>0$, so, by Schur's theorem, $\tilde{B}'_{ii}\geq 0$, and condition (\ref{condition}) holds.

\end{proof}

\begin{theorem}
	\label{theoremequiv}
Let $\Gamma$ be a matrix such as the one appearing in the definition of the previous lemma. Then, $\Gamma$ can be made positive semidefinite iff, for all $i,j=1,...,n$, $i\not=j$,
\begin{align}
\Big| \arcsin(C_{1i})+\arcsin(C_{2i})+  \arcsin(C_{1j})-  \nonumber\\
 \arcsin(C_{2j}) \Big| \leq \pi,
\label{arcsines}
\end{align}
\noindent plus permutations of the minus sign.

\end{theorem}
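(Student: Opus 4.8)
The plan is to reduce Theorem~\ref{theoremequiv} to Lemma~\ref{equiva} by showing that condition~(\ref{condition}) of the lemma, interpreted simultaneously over all indices $i$, is equivalent to the family of arcsine inequalities~(\ref{arcsines}). Since the lemma already establishes that $\Gamma$ can be completed to a positive semidefinite matrix iff there exists a single $x\in[-1,1]$ satisfying~(\ref{condition}) for \emph{every} $i=1,\ldots,n$, the entire task becomes a clean statement about real numbers: I must show that a common feasible $x$ exists precisely when~(\ref{arcsines}) holds for all pairs $i\neq j$. The key is that the $n$ quadratic constraints~(\ref{condition}) each carve out an interval of admissible $x$, and a common $x$ exists iff these intervals have nonempty intersection.

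First I would fix $i$ and analyze the quadratic $q_i(x)=1-x^2-C_{1i}^2-C_{2i}^2+2xC_{1i}C_{2i}\ge 0$. Writing $C_{1i}=\sin\theta_i$ and $C_{2i}=\sin\phi_i$ with $\theta_i,\phi_i\in[-\pi/2,\pi/2]$, I expect the roots of $q_i$ in $x$ to take the form $x=\cos(\theta_i\pm\phi_i)$ or equivalently $\sin\theta_i\sin\phi_i\pm\cos\theta_i\cos\phi_i$, so that the admissible set for $x$ is the interval $[\,\cos(\theta_i+\phi_i),\,\cos(\theta_i-\phi_i)\,]$ (up to ordering). Computing the discriminant should confirm $q_i\ge 0$ exactly on this interval. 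The natural substitution is $x=\cos\psi$ with $\psi\in[0,\pi]$, which turns $q_i(\cos\psi)\ge 0$ into the condition that $\psi$ lie between $|\theta_i-\phi_i|$ and $\pi-|\theta_i+\phi_i|$ or some such pair of bounds expressible through $\arcsin C_{1i}$ and $\arcsin C_{2i}$.

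Next I would intersect these intervals. A common $x=\cos\psi$ exists iff the left endpoints are all below the right endpoints, i.e. for every pair $i,j$ the lower bound coming from index $i$ does not exceed the upper bound coming from index $j$. Translating the endpoints back into arcsines, each such pairwise compatibility condition should collapse into an inequality of the form $|\arcsin C_{1i}+\arcsin C_{2i}+\arcsin C_{1j}-\arcsin C_{2j}|\le\pi$, together with the symmetric versions obtained by swapping which term carries the minus sign (the ``permutations of the minus sign'' in the statement). I would verify that the full list of pairwise interval-overlap conditions is exactly reproduced by~(\ref{arcsines}) ranging over $i\neq j$ and the sign permutations.

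The main obstacle I anticipate is the bookkeeping of signs and endpoint orderings: the arcsine is monotone but the map $\psi\mapsto\cos\psi$ is decreasing, so lower bounds on $x$ become upper bounds on $\psi$ and vice versa, and the absolute value in~(\ref{arcsines}) hides several cases depending on the signs of the $C$-entries. Care is needed to confirm that the seemingly asymmetric placement of the minus sign in~(\ref{arcsines}) really does, after including its permutations, capture \emph{all} pairwise overlap conditions and no spurious ones. Once the single-index interval description is pinned down and one pairwise overlap condition is shown to equal one arcsine inequality, the remaining cases follow by symmetry, and the equivalence with Lemma~\ref{equiva} closes the proof.
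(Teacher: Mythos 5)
Your proposal is correct and takes essentially the same route as the paper's own proof: invoke Lemma~\ref{equiva} to reduce everything to the existence of a common $x\in[-1,1]$, observe that each index $i$ confines $x$ to the interval $\left[\sin\theta_i\sin\phi_i-\cos\theta_i\cos\phi_i,\ \sin\theta_i\sin\phi_i+\cos\theta_i\cos\phi_i\right]=\left[-\cos(\theta_i+\phi_i),\ \cos(\theta_i-\phi_i)\right]$, and note that these intervals intersect iff every lower endpoint lies below every upper endpoint, which after the monotonicity argument (your $x=\cos\psi$ substitution, the paper's case analysis of $\cos\alpha_i+\cos\beta_j\geq 0$) is exactly the family~(\ref{arcsines}). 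The only slip is the sign on the lower endpoint, which you wrote as $\cos(\theta_i+\phi_i)$ rather than $-\cos(\theta_i+\phi_i)$, but since you also gave the correct algebraic form of the roots this is immaterial to the argument.
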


\begin{proof}
By lemma \ref{equiva}, positive semidefiniteness is equivalent to the existence of an $x\in [-1,1]$ satisfying the conditions (\ref{condition}). Without loss of generality, we assume that $C_{1i}\equiv \sin(\phi_i),C_{2i}\equiv \sin(\theta_i)$, for $-\pi/2\leq\theta_i,\phi_i\leq \pi/2$. Then, conditions (\ref{condition}) can be reexpressed as
\begin{align}
-\cos(\phi_i+\theta_i)\leq x\leq \cos(\phi_i-\theta_i).
\end{align}

An $x$ satisfying all these conditions exists iff the minimum of the upper limits is greater than or equal to the maximum of the lower limits. In other words, $\Gamma$ can be completed iff
\begin{align}
-\cos(\phi_j+\theta_j)\leq \cos(\phi_i-\theta_i), \forall i,j.
\end{align}

Call $\alpha_i\equiv |\phi_i-\theta_i|$, $\beta_j\equiv |\phi_j+\theta_j|$. Then, $0\leq\alpha_i,\beta_j\leq \pi$, and the positivity condition reads
\begin{align}
\cos(\alpha_i)+\cos(\beta_j)\geq 0.
\end{align}

Running through all possibilities ($[\alpha_i\leq \pi/2,\beta_j\leq \pi/2],[\alpha_i\leq \pi/2,\beta_j\geq \pi/2],[\alpha_i\geq \pi/2,\beta_j\leq \pi/2],[\alpha_i\geq \pi/2,\beta_j\geq \pi/2]$), one can check that this condition is equivalent to $\alpha_i+\beta_j\leq \pi$, and so we arrive at equations (\ref{arcsines}).

\end{proof}

We will now prove the claimed result in the article that $Q^{SB}=Q^1$ for $2n22$. This proof is an extension of the proof presented in \cite{Navascues2008}. For the case of $2n22$, let $\Gamma^1$ be a certificate of order $1$ for a particular $P(a,b|x,y)$. Then we have
\begin{align}
	\Gamma^1 = \left( 
		\begin{array}{ccc}
			1 & C_A & C_B \\
			C_A^T & X & Z \\
			C_B^T & Z^T & Y
		\end{array}
	\right),
\end{align}
where  $X_{ii}=Y_{jj}=1$ for $i=1,2$, $j=1,...,n$. Also, $C_A = (C_{A1},C_{A2})$ and $C_B= (C_{B1},\ldots,C_{Bn})$ are the marginal correlations for Alice's and Bob's measurement respectively. By Schur's Lemma \cite{Horn1999}, $\Gamma^1 \geq 0$ is equivalent to the positive semidefiniteness of 
\begin{align}
	\overline{\Gamma}^1 &= \left( \begin{array}{cc} X & Z \\ Z^T & Y \end{array} \right) - 
		\left( \begin{array} {c} C_A^T \\ C_B^T \end{array}\right) \left( C_A,C_B\right)  
\end{align}
where the matrix $\overline{\Gamma}^1$ has diagonal elements $\{1-C^2_{A1},1-C^2_{A2},1-C^2_{B1},\ldots,1-C^2_{Bn}\}$. We may assume that all the diagonal elements are non zero for $n\geq 2$; if any of them is zero, then the outcome of that particular measurement in deterministic and can be accounted for with a local hidden variable model. Therefore we can multiply $\overline{\Gamma}^1$ on both sides with the diagonal matrix, $M=\{\sqrt{1-C^2_{A1}},\sqrt{1-C^2_{A2}},\sqrt{1-C^2_{B1}},\ldots,\sqrt{1-C^2_{Bn}}\}$. The condition $\overline{\Gamma}^1\geq 0$ is then equivalent to
\begin{align}
	\Gamma=\left(\begin{array}{cc}A&C\\C^T&B\end{array}\right)\geq 0,
\end{align}
where the $2\times n$ matrix $C$ has elements $C_{ij} = (C_{ij}-C_iC_j) / \sqrt{1-C_i^2}\sqrt{1-C_j^2}$ and $A_{ii}=B_{jj}=1$ for $i=1,2$ and $j=1,...,n$. Now by Lemma \ref{equiva} and Theorem \ref{theoremequiv}, a certificate of order 1 exists, and therefore probability distribution, $P(a,b|x,y)$ is $Q^1$ if and only if condition in (\ref{nchsh}) is satisfied.



\bibliography{mycollection}
\end{document}